\newtheorem{theorem}{Theorem}
\newtheorem{lemma}{Lemma}
\newtheorem{remark}{Remark}
\DeclareMathOperator{\tr}{\mathrm{Tr}}
\DeclareMathOperator{\Rs}{\mathsf{R}_{\mathcal{S}} }
\DeclareMathOperator{\R}{ \mathsf{R} }
\DeclareMathOperator{\Kn}{ \mathsf{K}_n }
\DeclareMathOperator{\X}{ \mathsf{X} }
\DeclareMathOperator{\M}{\mathsf{M}}
\DeclareMathOperator{\Qn}{\mathsf{Q}_n}
\DeclareMathOperator{\Zn}{ \mathsf{Z}_n }
\DeclareMathOperator{\sinc}{\mathrm{sinc}}
\DeclareMathOperator{\Sn}{\mathsf{S}_n}
\DeclareMathOperator{\W}{\mathsf{W}}
\DeclareMathOperator{\rr}{\mathsf{r}}
\DeclareMathOperator{\r0}{\mathsf{r}_0}
\DeclareMathOperator{\rf}{\mathsf{r}_f}
\DeclareMathOperator{\A}{\mathsf{A}}
\title{Geometric Interpretation of Sensitivity to Structured Uncertainties in Spintronic Networks}
\author{S.\,P.\ O'Neil$^{1,*}$, \and
E.\,A.\ Jonckheere$^2$, \and S.\ Schirmer$^3$
\thanks{Any opinions in this work are solely that of the authors and do not reflect those of the U.S. Army, the USMA, or the DoD.}
\thanks{$^1$ Department of Electrical Engineering \& Computer Science, United States Military Academy, NY, USA. {\tt sean.oneil@westpoint.edu}}
\thanks{$^2$ Dept of Electrical \& Computer Engineering, University of Southern California, CA, USA. 
{\tt jonckhee@usc.edu}}
\thanks{$^3$ Faculty of Science \& Engineering, Physics, Swansea University, UK. {\tt s.m.shermer@gmail.com}}
}
\begin{document}
\maketitle
\thispagestyle{empty}

\begin{abstract}
We present a geometric model of the differential sensitivity of the fidelity error for state transfer in a spintronic network based on the relationship between a set of matrix operators. We show an explicit dependence of the differential sensitivity on the fidelity (error), and we further demonstrate that this dependence does not require a trade-off between the fidelity and sensitivity. We prove that for closed systems, ideal performance in the sense of perfect state transfer is both necessary and sufficient for optimal robustness in terms of vanishing sensitivity. We demonstrate the utility of this geometric interpretation of the sensitivity by applying the model to explain the sensitivity versus fidelity error data in two examples.
\end{abstract}
\section{Introduction}
Exploiting the unique properties of quantum systems has the potential to revolutionize applications in communications, sensing, and computing~\cite{koch_2022_quantum_optimal_control_survey,Glaser_2015}. However, harnessing any ``quantum advantage'' requires controls that not only meet exacting performance criteria, but retain a high level of performance in the face of model uncertainty.  While classical $H_\infty$ methods with guaranteed performance margins are applicable to a subset of open quantum systems undergoing weak continuous measurement~\cite{wang_2023,James_2007}, such methods, requiring nominal and robust stability of the controlled system, are ill-suited to the marginally stable dynamics of closed quantum systems. As such, post-synthesis selection of acceptable controllers based on an assessment of the robustness or the sensitivity of the performance measure is common. Various methods for delivering such assessments can be found in the literature, ranging from stochastic~\cite{khalid_2023_rim,koswara_2021} to analytic~\cite{Kosut_2013,jonckheere_2018_jt_test,o'neil_2024_log_sens}. 
\par Previous work has applied statistical analysis to a large set of controllers to determine whether a trade-off between performance and robustness is a fundamental limitation in quantum control problems. Such a trade-off might be expected from the fundamental limitation of classical feedback control, encapsulated by the frequency domain identity $S+T=I$, where $S$ and $T$ represent the error and log-sensitivity, respectively.  In~\cite{o'neil_2024_log_sens}, we demonstrated a trade-off between performance and robustness in the time domain in the sense that the normalized log-sensitivity of the fidelity error diverges as the error approaches zero. However,~\cite{jonckheere_2018_jt_test} and~\cite{o'neil_2023_data_set_1} suggest that such a trade-off does not necessarily hold for suboptimal controllers, those controllers that yield a non-zero fidelity error, in that the best-performing controllers (lowest fidelity error) are not necessarily those with the worst robustness (magnitude of the log-sensitivity). Further, \cite{Jonckheere_2017} provided examples of state transfer in spin networks where controllers with the best fidelity exhibited the smallest unnormalized differential sensitivity to parameter variations. 
\par Here we address the determination of an \textit{analytic} formulation relating fidelity to sensitivity which facilitates provable deductions on possible fundamental limitations without recourse to statistical analysis. To accomplish this, we develop a geometric model of the sensitivity of the fidelity error to parametric uncertainty that yields an explicit relationship between sensitivity and transfer fidelity. The remainder of the paper is organized as follows. In Section~\ref{sec: prelim} we review state transfer dynamics in spin networks and the derivation of the sensitivity of the fidelity error. In Section~\ref{sec: geometry_of_sensitivity} we develop the geometric model of the differential sensitivity. We follow with a major deduction of the model, the statement and proof that the sensitivity vanishes if and only if perfect state transfer is achieved. In Section~\ref{sec: examples} we employ the geometric model to analyze observed sensitivity-fidelity data in two state transfer examples. We conclude in Section~\ref{sec: conclusion}. 
\section{Preliminaries}\label{sec: prelim}
\subsection{Physical Model and Control Problem}
As in~\cite{o'neil_2023_data_set_1}, we consider a coupled network of $N$ spin-$1/2$ particles with $2^N \times 2^N$ real Hamiltonian 
\begin{equation}
   H_{tot} := \sum_{m\neq n} J_{nm} \left(X_n X_m + Y_n Y_m + \kappa Z_n Z_m\right). \label{eq:xxcoupling_system_hamiltonian} 
\end{equation}
$J_{mn}$ is the coupling strength between spin $m$ and $n$ and $\set{X_n,Y_n,Z_n}$ is the respective Pauli operator acting on spin $n$, formally the $N$-fold tensor product of $(n-1)$ copies of the $2 \times 2$ identity matrix with a Pauli matrix $\set{\sigma_x,\sigma_y,\sigma_z}$ in the $n$th position.  We consider the case of uniform coupling with $J_{mn} = J$ for all pairs $(m,n)$.  We restrict the dynamics to the single excitation subspace (SES), the subspace of the Hilbert space isomorphic to $\mathbb{C}^{2^N}$ that corresponds to a single excited spin in the network. This is justified when the goal is to transfer the state of the system from that of a single excited ``input'' spin to a single excited ``output'' spin with all other spins in the ground state, tantamount to transfer of a single qubit of information.  While control of quantum systems in an open-loop manner via optimally shaped time-varying fields is common in quantum control~\cite{koch_2022_quantum_optimal_control_survey}, we consider the alternative paradigm of shaping the energy landscape to facilitate the desired evolution~\cite{donovan_2011_hamitonian_manipulation,zhang_2023_gate_field_control}. In the interest of model simplicity, we introduce time-invariant external fields to maximize the probability of state transfer from a given pure input state $\ket{\psi_0}$ to a pure output state $\ket{\psi_f}$. As detailed in~\cite{Schirmer_2017}, these control fields enter the Hamiltonian as scalars $\Delta_n$ on the diagonal of the SES Hamiltonian. Assuming uniform nearest-neighbor coupling and choosing units such that $J/\hbar =1$ The controlled SES Hamiltonian is
\begin{equation*}
    H = \begin{pmatrix}
  \Delta_1      & 1 & \cdots & 0         & 1 \\
  1  & \Delta_2 & \cdots   & 0         & 0\\
  0        & 1 & \hdots     & 0         & 0\\
  \vdots   &   \vdots  & \ddots    & \vdots & \vdots \\
  1  & 0 &  \cdots  & 1 & \Delta_{N}
  \end{pmatrix}.
\end{equation*}
In terms of the Schr\"{o}dinger dynamics, the controlled state is the solution to the initial value problem
\begin{equation}
    \ket{\dot{\psi}(t)} = -iH \ket{\psi(t)}, \quad \ket{\psi(0)} = \ket{\psi_0}.
\end{equation}
The $\Delta_n$ are generated by maximizing the performance metric, defined as the fidelity, at a read-out time $t_f$ given by $F = |\braket{\psi_f|\psi(t_f)}|^2$. See~\cite{o'neil_2023_data_set_1,Schirmer_2017} for controller synthesis details. 
\subsection{Uncertainty Model}
To analyze the robustness of the controlled system, we use an uncertainty model that enables evaluation of how the introduction of uncertainty or disturbances alters the performance metric. In line with previous work~\cite{jonckheere_2018_jt_test,o'neil_2024_log_sens,o'neil_2023_cdc_2023} we consider real parametric uncertainty to the Hamiltonian represented as 
\begin{equation}\label{eq: perturbed_hamiltonian}
    \tilde{H}_{n} = H + \delta_n f_n S_n.
\end{equation}
$H$ is the nominal controlled Hamiltonian, and $\delta_n$ represents a small, real perturbation strength. $S_n$ is the structure associated with the uncertainty indexed by $n$.  $f_n$ is a factor that accounts for scaling by the control field strength for those $n$ that index uncertainty in a control channel; otherwise (e.g., for uncertainty in the couplings) $f_n = 1$. 
\subsection{Bloch Formulation}
To put the analysis in the framework of a real vector space, we use the Bloch formulation to describe the perturbed dynamics~\cite{floether_2012}. Consider the most general description of the Schr\"odinger dynamics where the density operator $\rho(t)$ for the pure state $\ket{\psi(t)}$ is given by $\ket{\psi(t)}\bra{\psi(t)}$. The perturbed dynamics generated by~\eqref{eq: perturbed_hamiltonian} are given by the von-Neumann equation
\begin{equation}\label{eq: von_neumann}
    \dot{\tilde{\rho}}(t) = -i [\tilde{H}_{n},\tilde{\rho}(t)], \quad \rho_0 = \ket{\psi_0}\bra{\psi_0}. 
\end{equation}
Taking the adjoint representation of this equation with respect to an orthonormal basis $\set{\sigma_n}_{n=1}^{N^2}$ of the $N \times N$ Hermitian matrices, such as the generalized Gell-Mann basis~\cite{bertlmann_2008_bloch}, yields the mapping $-i\tilde{H}_{n} \mapsto \mathrm{ad}_{-\imath \tilde{H}_n} := \tilde{\A} \in \mathbb{R}^{N^2 \times N^2}$ with elements 
\begin{equation}\label{eq: bloch_elements}
\tilde{\A}_{mn} = \tr\left( -\imath \tilde{H}_{n} [\sigma_m,\sigma_n] \right),
\end{equation}
where $[\cdot,\cdot]$ is the matrix commutator~\cite{floether_2012}. By linearity of the commutator and trace, $-iH \mapsto \A \in \mathbb{R}^{N^2 \times N^2}$ and $-iS_n \mapsto \Sn \in \mathbb{R}^{N^2 \times N^2}$ so that $\tilde{\A} = \A + \delta_n f_n \Sn$.  As $-iH$ and $-iS_n$ are skew-Hermitian, it follows from the expansion~\eqref{eq: bloch_elements} that $\A$ and $\Sn$ are skew-symmetric.  The state $\rho(t)$ maps to $\mathsf{r}(t) \in \mathbb{R}^{N^2}$ with elements $\mathsf{r}_m(t) = \tr \left( \rho(t) \sigma_m\right)$ so that $\rho_0 \mapsto \r0 \in \mathbb{R}^{N^2}$, $\ket{\psi_f}\bra{\psi_f}=: \rho_f  \mapsto \rf \in \mathbb{R}^{N^2}$, and the state equation is
\begin{equation}\label{eq: bloch_form}
    \dot{\tilde{\rr}}(t) = \tilde{\A}\tilde{\rr}(t), \quad \tilde{\rr}(0) = \r0
\end{equation}
 with solution $\tilde{\rr}(t) = e^{\tilde{\A}t}\r0$. The fidelity at the read-out time $t_f$ is $F = \rf^T e^{\tilde{\A}t_f} \r0$. The fidelity error is $\mathsf{e} = 1-F$. Finally, we define $\Phi:= e^{\mathsf{A}t_f}$ as the controlled propagator at $t_f$ and nominal $\A$ and define its perturbed counterpart as $\tilde{\Phi}$. 
\subsection{Differential Sensitivity}
The un-normalized differential sensitivity of $\mathsf{e}$ to a perturbation structured as $\Sn$ is given by $\zeta_n := \partial \mathsf{e} / \partial \delta_n$~\cite{o'neil_2024_log_sens,najfeld_1995_dexpma}.  Given the spectral decomposition $\A = \M \Lambda \M^\dagger$,
\begin{equation*}
\begin{split}
  \zeta_n & = \rf^T \left( \frac{\partial}{\partial \delta_n} e^{\tilde{\A}t_f}\right)\r0 = -t_f f_n u^{\dagger} \left( \Zn \odot \X \right)v \label{eq: K_n_in_eigenbasis}\\
  &= - t_f f_n \rf^T \M \left( \int_0^1 e^{t_f \Lambda (1-s)} \M^\dagger \Sn \M e^{t_f \Lambda s} ds \right) \M^{\dagger} \r0 
\end{split}
\end{equation*}
where $\Zn = \M^\dagger \Sn \M$, $u = \M^\dagger \rf$, $v = \M^\dagger \r0$, and $\odot$ denotes the Hadamard product. For matrices $A$ and $B$, the Hadamard product is defined such that the $(i,j)$ element of $A \odot B$ = $a_{ij}b_{ij}$. Denoting $\imath\lambda_k$ as the $k$th diagonal element of $\Lambda$, the entries of $\Zn \odot \X$ are 
\begin{equation}\label{eq: X_elements}
 \left(\Zn \odot \X \right)_{k \ell}= \begin{cases} z_{k k} e^{\imath\lambda_k t_f}, \quad & \lambda_k = \lambda_\ell \\ 
z_{k \ell}\frac{e^{\imath\lambda_k t_f} - e^{\imath\lambda_\ell t_f}}{it_f ( \lambda_k - \lambda_\ell) }, & \lambda_k \neq \lambda_\ell \end{cases}.
\end{equation}
Alternatively, we may define the operator $\mathsf{K}_n$ such that $\rf^T \left( \mathsf{K}_n \right) \r0 := u^\dagger \left( \Zn \odot \X \right)v$ and
\begin{equation}\label{eq: integral_Kn}
  \frac{-\zeta_n}{t_f f_n} =  \rf^T \mathsf{K}_n \r0 = \rf^T \left( \int_0^1 e^{t_f \mathsf{A}(1-s)} \Sn e^{t_f\mathsf{A}s} ds\right) \r0.
\end{equation}
\section{Geometric Interpretation of the Sensitivity}\label{sec: geometry_of_sensitivity}
\subsection{Fidelity versus Sensitivity Geometric Model}
Consider the space $\mathcal{B}(\mathbb{R}^{N^2})$ of bounded linear operators on $\mathbb{R}^{N^2}$~\cite{siewert_2022_matrix_bases}. In $\mathcal{B}(\mathbb{R}^{N^2})$, define $\R:=\rf \mathsf{r}_0^T$ as the \textit{input-output data operator}. $\Phi \in \mathcal{B}(\mathbb{R}^{N^2})$ may be called the \textit{input-output agnostic fidelity operator} as $F=\tr(\mathsf{R}^T\Phi)$. 
Likewise, we call $\Kn$ the \textit{input-output agnostic sensitivity operator} as justified by~\eqref{eq: integral_Kn}. 
Let $\mathcal{S}_n$ be the two-dimensional subspace of $\mathcal{B}(\mathbb{R}^{N^2})$ spanned by $\Phi$ and $\Kn$. 
The basic idea is to project the input-output operator $\mathsf{R}$ on $\mathcal{S}_n$ to introduce the input-output data in subspace of input-output agnostic operators. 
We use the Hilbert-Schmidt inner product so that for $A,B \in \mathcal{B}(\mathbb{R}^{N^2})$, $\braket{A,B} = \tr(A^T B)$ with compatible definition of the norm $\| A \|$. We define the orthogonal projection of $A$ on the span of $B$ as $\mathcal{P}_{B}(A) = \braket{A,\hat{B}}\hat{B}$ where $\hat{B} = B/\| B \|$.  
\begin{lemma}\label{lemmma: Phi_K_perp}
    The operators $\Phi$ and $\mathsf{K}_n$ are orthogonal. 
\end{lemma}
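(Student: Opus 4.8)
The plan is to exploit the skew-symmetry of $\A$ and $\Sn$ established in the preliminaries. First I would observe that since $\A^T=-\A$, the controlled propagator $\Phi=e^{\A t_f}$ is orthogonal, so that $\Phi^T=e^{\A^T t_f}=e^{-\A t_f}=\Phi^{-1}$. I would then compute the Hilbert--Schmidt inner product $\braket{\Phi,\Kn}=\tr(\Phi^T\Kn)$ directly, substituting the integral representation of $\Kn$ from~\eqref{eq: integral_Kn} and moving the constant factor $\Phi^T$ inside the integral by linearity of the trace.

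The key simplification is that $\Phi^T$ and the left exponential factor $e^{\A t_f(1-s)}$ in the integrand are both functions of the single matrix $\A$ and therefore commute and combine additively: $e^{-\A t_f}\,e^{\A t_f(1-s)}=e^{-\A t_f s}$. The integrand thus reduces to $\tr\bigl(e^{-\A t_f s}\,\Sn\,e^{\A t_f s}\bigr)$, and applying the cyclic property of the trace together with $e^{\A t_f s}e^{-\A t_f s}=I$ collapses it to $\tr(\Sn)$, which is independent of the integration variable $s$. Integrating over $s\in[0,1]$ then yields $\tr(\Phi^T\Kn)=\tr(\Sn)$.

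To conclude, I would invoke the fact that $\Sn$ is skew-symmetric, so $\tr(\Sn)=0$, giving $\braket{\Phi,\Kn}=0$ as claimed. I do not anticipate a serious obstacle here; the only points requiring a moment of care are the justification for commuting the two exponentials (valid precisely because they are generated by the same $\A$, a step that would fail for a general pair of matrices) and recognizing that the cyclic trace identity eliminates all dependence on the propagator. The heart of the argument is simply the reduction of the integrand to the trace of a skew-symmetric matrix.
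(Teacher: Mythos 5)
Your proposal is correct and follows essentially the same route as the paper: the paper likewise writes $\braket{\Phi,\Kn}=\tr\bigl(e^{-t_f\A}\int_0^1 e^{t_f\A(1-s)}\Sn e^{t_f\A s}\,ds\bigr)$, combines the exponentials to reduce the integrand to $\tr\bigl(e^{-t_f\A s}\Sn e^{t_f\A s}\bigr)=\tr(\Sn)$ via cyclicity, and concludes from the skew-symmetry of $\Sn$ that the trace vanishes. Your added care in justifying $\Phi^T=e^{-t_f\A}$ and the commuting of exponentials generated by the same $\A$ is a faithful spelling-out of steps the paper leaves implicit.
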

\begin{proof}
By definition of the inner product and the differential sensitivity, we have 
\begin{align*}
        \braket{\Phi,\mathsf{K}_n} & = \mathrm{Tr}\left( e^{-t_f \mathsf{A}} \left( \int_0^1 e^{t_f \mathsf{A} (1-s)} \Sn e^{t_f \mathsf{A} s} ds \right) \right) \\ &= \int_0^1 \mathrm{Tr} \left( e^{-t_f \mathsf{A} s} \Sn e^{t_f \mathsf{A} s} \right)ds = \mathrm{Tr} \left( \Sn \right).
\end{align*}
Now~\eqref{eq: bloch_elements} implies $\Sn \in \mathfrak{so}(N^2)$, being real and skew-symmetric, so $\mathrm{Tr}(\Sn) = 0$ for any allowed structure $S_n$ that retains unitary dynamics.
\end{proof}
\begin{theorem}
    The sensitivity and fidelity are related by the expression $|\zeta_n| = f_n t_f \| \Kn \| \cdot \| \Rs \| \sqrt{1 - \left( \frac{F}{N \| \Rs \| }\right)^2}$ where $\Rs$ is the projection of $\R:=\rf \mathsf{r}_0^T$ on the subspace $\mathcal{S}_n$.
\end{theorem}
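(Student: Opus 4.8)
The plan is to realize the fidelity $F$ and the sensitivity $\zeta_n$ as the two orthogonal coordinates of the data operator $\R$ in the basis $\{\Phi,\Kn\}$ of $\mathcal{S}_n$, and then to recover $|\zeta_n|$ from the Pythagorean decomposition of $\|\Rs\|^2$. First I would rewrite both scalars as Hilbert--Schmidt inner products. Since $\R=\rf\r0^T$, the cyclic invariance of the trace gives $\braket{\R,\Phi}=\tr(\R^T\Phi)=\rf^T\Phi\,\r0=F$ and, by~\eqref{eq: integral_Kn}, $\braket{\R,\Kn}=\tr(\R^T\Kn)=\rf^T\Kn\,\r0=-\zeta_n/(t_f f_n)$.

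Next I would invoke Lemma~\ref{lemmma: Phi_K_perp}: because $\Phi\perp\Kn$, the pair forms an orthogonal basis of $\mathcal{S}_n$, so the projection splits as
\[
  \Rs=\mathcal{P}_{\Phi}(\R)+\mathcal{P}_{\Kn}(\R)=\frac{\braket{\R,\Phi}}{\|\Phi\|^2}\,\Phi+\frac{\braket{\R,\Kn}}{\|\Kn\|^2}\,\Kn .
\]
The two summands are themselves orthogonal, so the Pythagorean theorem yields
\[
  \|\Rs\|^2=\frac{\braket{\R,\Phi}^2}{\|\Phi\|^2}+\frac{\braket{\R,\Kn}^2}{\|\Kn\|^2}
  =\frac{F^2}{\|\Phi\|^2}+\frac{\zeta_n^2}{(t_f f_n)^2\,\|\Kn\|^2}.
\]

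The only nonroutine ingredient is the value of $\|\Phi\|$. Because $\A$ is real and skew-symmetric, $\Phi=e^{\A t_f}$ is a real orthogonal matrix on $\mathbb{R}^{N^2}$, whence $\Phi^T\Phi=I_{N^2}$ and $\|\Phi\|^2=\tr(\Phi^T\Phi)=N^2$, i.e.\ $\|\Phi\|=N$. Substituting this, solving the displayed identity for $\zeta_n^2$, and factoring $\|\Rs\|^2$ out of the resulting radicand gives $|\zeta_n|=f_n t_f\,\|\Kn\|\cdot\|\Rs\|\sqrt{1-(F/(N\|\Rs\|))^2}$, as claimed. I expect the main obstacle to be precisely this norm computation together with the bookkeeping of the $N^2$-dimensional ambient space; once $\|\Phi\|=N$ is in hand, the remainder is orthogonal projection and elementary algebra.
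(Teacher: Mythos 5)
Your proof is correct and takes essentially the same route as the paper's: the same inner-product identifications $\braket{\R,\Phi}=F$ and $\braket{\R,\Kn}=-\zeta_n/(f_n t_f)$, the same orthogonal splitting of $\Rs$ via Lemma~\ref{lemmma: Phi_K_perp}, and the same computation $\|\Phi\|=N$ from $\Phi\in SO(N^2)$. The only difference is cosmetic: where the paper introduces the angles $\phi_n,\theta_n$ and uses $\cos\theta_n=\pm\sin\phi_n$ to write $|\zeta_n|=f_nt_f\|\Kn\|\cdot\|\Rs\|\,|\sin\phi_n|$, you apply the Pythagorean identity to $\|\Rs\|^2$ and solve algebraically for $\zeta_n^2$ --- an equivalent packaging of the same final step, with the minor bonus that nonnegativity of $\zeta_n^2$ automatically certifies the radicand is nonnegative.
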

\begin{proof}
By definition of $\R$, we have $F = \tr(\mathsf{R}^T \Phi) = \braket{\mathsf{R},\Phi}$ and $-\zeta_n / (f_n t_f) = \tr(\mathsf{R}^T \mathsf{K}_n) = \braket{\mathsf{R},\mathsf{K}_n}$. Then the projection of $\mathsf{R}$ on the subspace $\mathcal{S}_n$ is
\begin{equation}\label{eq: proj_S}
    \mathcal{P}_{\mathcal{S}}(\mathsf{R}) = \mathcal{P}_{\Phi}(\R) + \mathcal{P}_{\Kn}(\R) =: \mathsf{R}_{\mathcal{S}}. 
\end{equation}
This is equivalent to 
\begin{equation}\label{eq: Rs}
    \Rs = \frac{\braket{\R,\Phi}}{N^2} \Phi + \frac{\braket{\R,\Kn}}{\| \Kn \|^2} \Kn = \frac{F}{N^2}\Phi + \frac{-\zeta_n}{f_n t_f \| \Kn \|^2}\Kn,
\end{equation}
where $\| \Phi \| = N$ follows from $\Phi \in SO(N^2)$. Defining $\phi_n$ as the angle subtending $\Rs$ and $\Phi$ and $\theta_n$ as the angle subtending $\Kn$ and $\Rs$ yields
\begin{equation}\label{eq:trigonometry}
    \Rs = \left( \| \Rs \| \cos \phi_n \right) \hat{\Phi} + \left( \| \Rs \| \cos \theta_n \right) \hat{\mathsf{K}}_n. 
\end{equation}
\begin{figure}
\centering
{\includegraphics[width = 0.7\columnwidth]{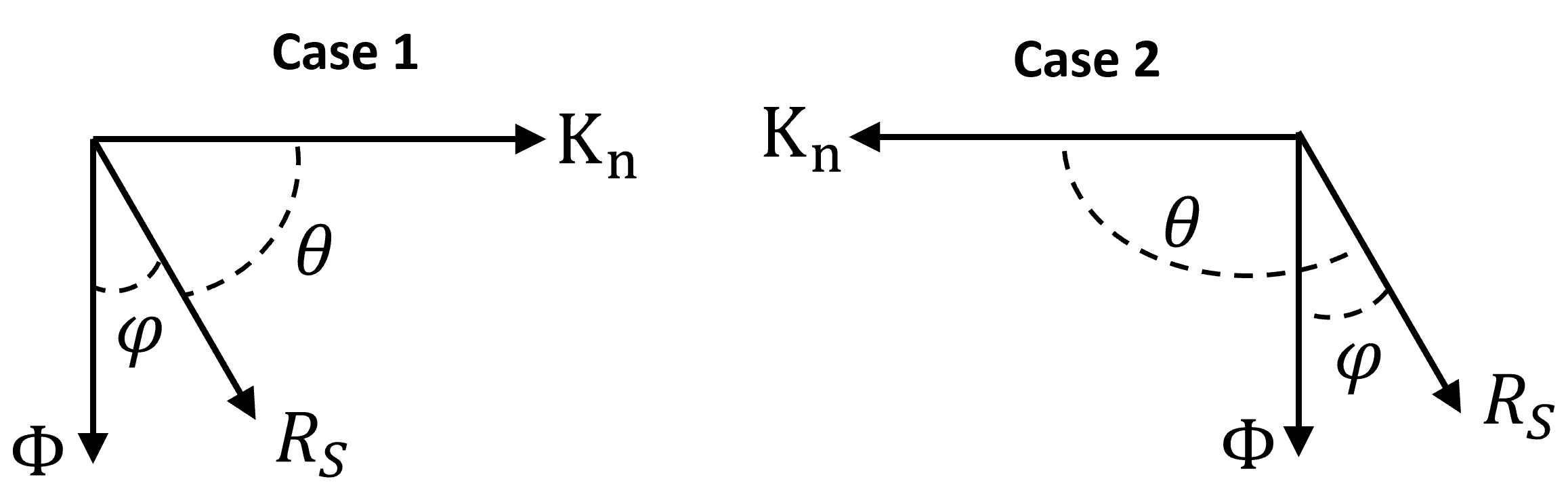}}
\caption{Relation of operators in the subspace $\mathcal{S}_n$.} 
\label{fig: geometry}
\end{figure}
Figure~\ref{fig: geometry} displays the relation of the operators in the subspace $\mathcal{S}_n$. It follows that $F = N \| \Rs \| \cos \phi_n$ and $ \zeta_n = -t_f f_n \| \Kn \| \cdot \| \Rs \| \cos \theta_n$.  By Lemma~\ref{lemmma: Phi_K_perp}, we have that $\Phi$ and $\Kn$ are orthogonal so that either $\phi_n+\theta_n = \pi/2 \mod \pi$ as in Case $1$ or $| \phi_n - \theta_n | = \pi/2$ as illustrated in Case $2$, so that $\cos \theta_n = \pm \sin \phi_n$. We then have following relation between the magnitude of the differential sensitivity and the fidelity
\begin{equation}\label{eq: sens_fidelity_1}
    |\zeta_n| = f_n t_f \| \Kn \| \cdot \| \Rs \| \cdot | \sin \phi_n |.
\end{equation}
Making the substitution $|\sin \phi_n| = \sqrt{1 - \left(  {F}/ \left( N \| \Rs \|  \right) \right)^2}$ completes the proof.
\end{proof}
Eqs.~\eqref{eq: proj_S}-\eqref{eq: Rs} appear to be the quantum equivalent of the classical relation $S+T=I$, 
expressed in terms of the fidelity $F$ rather than the error $\mathsf{e} = 1-F$. 
At the less conceptual level of~\eqref{eq: sens_fidelity_1}, 
holding $\|\Kn\|$ and $\|\Rs\|$ constant, $|\zeta_n|$ and $\mathsf{e}$ decrease simultaneously, in disagreement with the classical limitations mandating a trade-off between error and sensitivity. What undermines the simplistic formulation that  $|\zeta_n|$ and $\mathsf{e}$ are concordant 
is that $\|\Kn\|$ and $\|\Rs\|$ are not independent quantities; 
however, the first and crucial quantity can be bounded.  

\subsection{The Relation between $\texorpdfstring{\|\Kn\|}{Kn}$ and Eigenstructure}

To determine how $\| \Kn \|$ depends on the eigenstructure of the controlled propagator $\Phi$, we employ~\eqref{eq: X_elements} and $\Kn =\M \left( \Zn \odot \X \right) \M^\dagger =: \M \Qn \M^\dagger$.  It follows immediately that $\| \Kn \|^2 = \tr\left( \Qn^\dagger \Qn \right) = \sum_{k,\ell=1}^{N^2} |q_{k \ell}|^2$ where 
\begin{equation}\label{eq: norm_q_kell}
    |q_{k \ell}|^2 = \begin{cases} |z_{k \ell}|^2, &\lambda_k = \lambda_\ell  \\ |z_{k \ell}|^2 \sinc^2 \left( \frac{1}{2} \omega_{k \ell}t_f \right), & \text{otherwise} \end{cases}
\end{equation}
with the definition $\omega_{k \ell} := \left(\lambda_k - \lambda_\ell \right)$.  We now identify the maximum of $\| \Kn \|$ and establish strict positivity. 
\begin{lemma}\label{lemma: norm_Kn}
    $\| \Kn \|$ is non-zero and bounded above by $\| \Sn \|$. 
\end{lemma}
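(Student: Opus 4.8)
The plan is to reduce both assertions to the explicit entrywise description~\eqref{eq: norm_q_kell} of $\Qn = \Zn \odot \X$. First I would record the two norm identities that make~\eqref{eq: norm_q_kell} usable. Since $\M$ is unitary, conjugation preserves the Hilbert--Schmidt norm, so on the one hand $\| \Kn \|^2 = \tr(\Qn^\dagger \Qn) = \sum_{k,\ell}|q_{k\ell}|^2$ (as already noted in the text), and on the other hand $\| \Zn \|^2 = \tr(\Zn^\dagger \Zn) = \tr(\M^\dagger \Sn^\dagger \M \M^\dagger \Sn \M) = \| \Sn \|^2$, which gives $\sum_{k,\ell}|z_{k\ell}|^2 = \| \Sn \|^2$. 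These two facts convert the lemma into a statement about the nonnegative scalars $|q_{k\ell}|^2$ appearing in~\eqref{eq: norm_q_kell}.

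For the upper bound I would use that $0 \le \sinc^2(x) \le 1$ for every real $x$, so each summand in~\eqref{eq: norm_q_kell} obeys $|q_{k\ell}|^2 \le |z_{k\ell}|^2$, with equality on the diagonal and on any degenerate block where $\lambda_k = \lambda_\ell$. Summing over all $k,\ell$ and substituting the identity $\sum_{k,\ell}|z_{k\ell}|^2 = \| \Sn \|^2$ yields $\| \Kn \|^2 \le \| \Sn \|^2$, hence $\| \Kn \| \le \| \Sn \|$. This half is a termwise comparison and should be routine.

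The strict positivity $\| \Kn \| \neq 0$ is the delicate half and the step I expect to be the main obstacle. Because $\| \Kn \|^2$ is a sum of nonnegative terms, it vanishes only if each pair $(k,\ell)$ has either $z_{k\ell} = 0$ or $\sinc(\tfrac12 \omega_{k\ell} t_f) = 0$, and the sinc factor can vanish only off the diagonal, precisely when $\omega_{k\ell} t_f$ is a nonzero integer multiple of $2\pi$. Any admissible structure has $\Sn \neq 0$, so $\| \Sn \| = \| \Zn \| > 0$ and at least one $z_{k\ell} \neq 0$; the cleanest route is then to isolate the equal-eigenvalue block, which carries unit weight, giving $\| \Kn \|^2 \ge \sum_{\lambda_k = \lambda_\ell}|z_{k\ell}|^2$ and reducing the claim to showing this commutant component of $\Sn$ is nonzero. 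To close the remaining gap I would note that $\| \Kn \|^2$, regarded as a function of $t_f$, is real analytic and equals $\sum_{k,\ell}|z_{k\ell}|^2 = \| \Sn \|^2 > 0$ at $t_f = 0$, so it can vanish only on a discrete set of read-out times; the simultaneous alignment of \emph{every} nonzero spectral component of $\Sn$ with an exact zero of the sinc weight is therefore non-generic. Upgrading this from a generic statement to an unconditional one at the operating $t_f$ is the crux, and is where I would expect to appeal to the explicit eigenvalue structure of the uniformly coupled ring.
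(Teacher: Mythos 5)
Your upper-bound half is correct and is exactly the paper's argument: the termwise estimate $\sinc^2 \le 1$ applied to~\eqref{eq: norm_q_kell}, together with unitary invariance of the Hilbert--Schmidt norm giving $\|\Zn\| = \|\Sn\|$, yields $\|\Kn\| \le \|\Sn\|$. The genuine gap is in the positivity half, and you have located it yourself: your argument establishes $\|\Kn\| > 0$ only for $t_f$ outside a discrete set, and neither of your fallback routes closes it. The commutant-block reduction cannot work unconditionally, because nothing in the hypotheses forces the equal-eigenvalue component of $\Zn$ to be nonzero: since $\A$ is real skew-symmetric, its kernel admits a real eigenbasis, and for a real eigenvector $m_k$ one has $z_{kk} = m_k^T \Sn m_k = 0$ automatically by skew-symmetry of $\Sn$, while the off-diagonal entries of the zero block vanish whenever $S_n$ has vanishing diagonal matrix elements in the eigenbasis of $H$ --- a situation the lemma's hypotheses do not exclude. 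The analyticity argument, as you concede, delivers genericity in $t_f$, not a statement at the operating read-out time.

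You should know, however, that the step you flag as the crux is precisely where the paper's own proof is thin. The paper proves positivity via the integrated adjoint action: it computes the eigenvalues of $\int_0^1 \mathrm{Ad}_{(e^{-t_f \A s})}\, ds$ to be $1$ when $\lambda_k = \lambda_\ell$ and $\imath\left(e^{-\imath \omega_{k\ell} t_f} - 1\right)/\left(t_f\, \omega_{k\ell}\right)$ otherwise, asserts that ``these eigenvalues are never zero,'' and concludes the kernel is trivial, so $\Kn^T \Kn \neq 0$ for nontrivial $\Sn$. But the modulus of that off-diagonal eigenvalue is exactly your weight $|\sinc(\omega_{k\ell} t_f/2)|$, which vanishes precisely at your resonance condition $\omega_{k\ell} t_f \in 2\pi\mathbb{Z}\setminus\{0\}$; the entrywise formulation via~\eqref{eq: norm_q_kell} and the adjoint-action formulation via~\eqref{eq: integral_Kn} are the same computation in different notation. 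So the paper's ``never zero'' claim holds only for non-resonant read-out times, i.e., under an implicit assumption that no Bohr-frequency gap of $\A$ satisfies $\omega_{k\ell} t_f \in 2\pi\mathbb{Z}$. Your proposal is therefore incomplete as written, but its honestly acknowledged gap coincides with an unacknowledged one in the paper: making either argument airtight requires a non-resonance hypothesis on $t_f$, or a proof that $\Zn$ has a nonzero entry on some non-resonant pair $(k,\ell)$.
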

\begin{proof}
To see that $\| \Kn \|$ is not zero it suffices to show that $\Kn^T \Kn$ is never the zero matrix.  Consider the integral expression of $\Kn$ following from~\eqref{eq: integral_Kn}. Pulling $e^{\A t_f} = \Phi$ out of $\Kn^T$ and $\Kn$ and using $\Phi^T \Phi = I$, we are left with $\Kn^T \Kn = \left(\int_0^1 e^{-t_f \mathsf{A} s} \Sn e^{t_f \mathsf{A} s} ds\right)^T \left(\int_0^1 e^{-t_f \mathsf{A} \tau} \Sn e^{t_f \mathsf{A} \tau} d \tau \right)$.  Noting that $e^{\pm t_f \A  s} \in SO(N^2)$ and $\Sn \in \mathfrak{so}(N^2)$, shows that the integrand is the adjoint action of the special orthogonal group on its Lie algebra for any $s$~\cite{elliott_2009_matrices_in_action}. This Lie group conjugation is described by the mapping $\mathrm{Ad}_{(e^{-t_f \A  s})}: \Sn \mapsto e^{-t_f \A  s} \Sn e^{t_f \A s}$.  If the kernel of the integral of $\mathrm{Ad}_{(e^{-t_f \A s})}$ is trivial, it follows that $\Kn^T \Kn$ is not the zero matrix for non-trivial $\mathsf{S}_n$. The spectrum of $\A$ given by $\set{\imath \lambda_k}_{k=1}^{N^2}$ determines the eigenvalues of $\mathrm{Ad}_{(e^{-t_f \A s})}$, which are $e^{-\imath t_f (\lambda_k - \lambda_\ell) s}$ for $k,\ell$ indexed from $1$ to $N^2$.  The eigenvalues of $\mathrm{Ad}_{(e^{-t_f \A s})}$ retain the dependence of the integration variable $s$ (equivalently $\tau$), as can be verified by an explicit spectral decomposition of $\mathrm{Ad}_{(e^{-t_f \A s})}$.  Integrating $s$ over the interval $[0,1]$ yields eigenvalues for $\int_0^1 \mathrm{Ad}_{(e^{-t_f \A s})} ds$ of $1$ for $\lambda_\ell = \lambda_k$ and $\imath \left( e^{-\imath (\lambda_k - \lambda_\ell)t_f} - 1 \right) /\left( t_f (\lambda_k - \lambda_\ell) \right)$ otherwise.  As these eigenvalues are never zero, the kernel of the integral of $\mathrm{Ad}_{(e^{-t_f \A s})}$ is trivial.  It follows that, for a non-trivial uncertainty structure $\Sn$, $\| \Kn \|$ is non-zero. To establish the upper bound, note that when all $\sinc$ terms in~\eqref{eq: norm_q_kell} assume the maximum of $1$, then $\| \Kn \| \leq \sqrt{\sum_{k,\ell=1}^{N^2} |z_{k \ell}|^2} = \sqrt{\tr \left(\Zn^\dagger \Zn \right)} =  \sqrt{ \tr\left(\M^\dagger \mathsf{S}_n^\dagger \M \M^\dagger \mathsf{S}_n \M  \right)} = \| \Sn \|$. 
\end{proof}
Replacing $\|\Kn\|$ by its upper bound in~\eqref{eq: sens_fidelity_1} yields an upper bound on $|\zeta_n|$, which decreases as $\mathsf{e}$ decreases. \textit{But this behavior on the bound does not rule out an increase in $|\zeta_n|$ as $\mathsf{e}$ decreases}~\cite{o'neil_2024_log_sens}. This indicates that any quantum performance limitation is not as straightforward as in classical control and that $|\zeta_n|$ and $\mathsf{e}$ could be concordant~\cite{jonckheere_2018_jt_test,o'neil_2023_data_set_1} or discordant~\cite{o'neil_2024_log_sens}.
\subsection{Sufficient Condition for Vanishing Sensitivity}
\begin{lemma}\label{lemma: zero_sensitivity}
For unitary evolution, the differential sensitivity $\zeta_n$ vanishes for any physically allowable perturbation if the controlled propagator $\Phi$ induces perfect state transfer.
\end{lemma}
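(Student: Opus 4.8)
The plan is to turn the physical hypothesis of perfect state transfer into a single algebraic identity on Bloch vectors and then exploit the skew-symmetry that the closed-system structure guarantees. Perfect state transfer at the read-out time means $\rho(t_f) = \rho_f$, which in the Bloch representation is the vector equality $\Phi \r0 = \rf$; because $\Phi \in SO(N^2)$ is norm-preserving and pure-state Bloch vectors have fixed norm, this is precisely the equality case of Cauchy--Schwarz applied to $F = \rf^T \Phi \r0$. This is the only point at which the perfect-transfer hypothesis enters the argument.

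With $\rf = \Phi \r0$ in hand, I would substitute into the sensitivity expression~\eqref{eq: integral_Kn}, writing $\zeta_n = -t_f f_n\, \rf^T \Kn \r0 = -t_f f_n\, \r0^T \Phi^T \Kn \r0$. The nominal Bloch generator is skew-symmetric, $\A^T = -\A$, so $\Phi^T = e^{-\A t_f}$, and the product telescopes via $\Phi^T e^{t_f \A (1-s)} = e^{-t_f \A s}$, leaving $\Phi^T \Kn = \int_0^1 e^{-t_f \A s}\, \Sn\, e^{t_f \A s}\, ds$. This is exactly the inner adjoint-orbit integral already encountered in the proof of Lemma~\ref{lemma: norm_Kn}.

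The crux is then to recognize this integral as skew-symmetric. For each $s$ the integrand is the conjugation $\mathrm{Ad}_{e^{-t_f \A s}}(\Sn) = e^{-t_f \A s}\, \Sn\, e^{t_f \A s}$ of the Lie-algebra element $\Sn \in \mathfrak{so}(N^2)$ by the group element $e^{t_f \A s} \in SO(N^2)$; the adjoint action maps $\mathfrak{so}(N^2)$ into itself, so the integrand, and hence the integral, remains skew-symmetric. Finally, any quadratic form $\r0^T C \r0$ with $C$ skew-symmetric vanishes identically, since it equals its own transpose $-\r0^T C \r0$. Therefore $\zeta_n = 0$, which is the claim.

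I expect the main obstacle to be conceptual rather than computational: justifying that perfect transfer yields the \emph{vector} equality $\Phi \r0 = \rf$ rather than mere alignment of the inner product, which requires care with the normalization of $\r0$ and $\rf$ so that maximal $F$ forces the propagated vector to coincide with the target. Once that identification is fixed, the cancellation $\Phi^T e^{t_f \A(1-s)} = e^{-t_f \A s}$ and the closure of $\mathfrak{so}(N^2)$ under the adjoint action --- both already exploited in Lemmas~\ref{lemmma: Phi_K_perp} and~\ref{lemma: norm_Kn} --- close the argument at once.
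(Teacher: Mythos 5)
Your proposal is correct and follows essentially the same route as the paper: both reduce $\zeta_n$ to a pairing with $\W = \Phi^T\Kn = \int_0^1 e^{-t_f\A s}\,\Sn\, e^{t_f\A s}\,ds$, identify $\W \in \mathfrak{so}(N^2)$ via the adjoint action of $SO(N^2)$ on its Lie algebra (as in Lemma~\ref{lemma: norm_Kn}), and use perfect transfer $\rf = \Phi\r0$ to make the pairing vanish --- the paper phrases this as $\tr(\R^T\Phi\,\W)=0$ because $\R^T\Phi = \r0\r0^T$ is symmetric, while you phrase it as the skew quadratic form $\r0^T\W\r0 = 0$, which is the same computation. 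Your Cauchy--Schwarz justification that $F=1$ forces the vector equality $\Phi\r0=\rf$ (using $\|\r0\|=\|\rf\|=1$ and orthogonality of $\Phi$) is sound and matches the identification the paper makes.
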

\begin{proof}
As a physically realizable perturbation structure, $S_n$ is Hermitian, and its Bloch representation $\Sn$ is skew-symmetric.  For non-zero $t_f$ and $f_n$, $-\zeta_n/(f_n t_f)$ is given by 
    \begin{align*}
        &\braket{\R,\Kn} = \mathrm{Tr}\left( \R^T \left( \int_0^1 e^{t_f \mathsf{A} (1-s)} \Sn e^{t_f \mathsf{A} s} ds \right) \right) \\ &= \mathrm{Tr} \left( \R^T \Phi  \left( \int_0^1 e^{-t_f \mathsf{A} s} \Sn e^{t_f \mathsf{A} s}ds \right) \right) = \tr \left( \R^T \Phi \W \right). 
    \end{align*}  
From Lemma~\ref{lemma: norm_Kn}, the term in the integral is the adjoint action of $SO(N^2)$ on $\mathfrak{so}(N^2)$ for any value of $s$.  Thus $\W \in \mathfrak{so}(N^2)$, and $\braket{\R,\Kn} = 0$ is secured if the product $\R^T \Phi =\mathsf{r}_0 \rf^T \Phi $ is symmetric. We thus require $\r0 \left( \rf^T \Phi \right) = (\Phi^T \rf) \r0^T$.  Symmetry of this dyadic product requires $\r0=\lambda \Phi^T\rf$ for some real $\lambda$. Since $\|\rf\|=\|\r0\|=1$ and $\Phi$ is orthogonal, it follows that $\lambda=\pm 1$. For $\lambda=1$, $\rf=\Phi \r0$, which is the condition for perfect state transfer from input state $\r0$ to $\rf$.  Thus, for $\lambda=1$, a controller inducing perfect state transfer is sufficient for $| \zeta_n | = 0$. For $\lambda=-1$, we would have $\rf=-\Phi\r0$. Observing that $(\rf)_m=\mathrm{Tr}(\rho_{f} \sigma_m)$ with $\rho_0 \geq 0$, it follows that $(-\rf)_m=\mathrm{Tr}(-\rho_{f} \sigma_m)$, and $(-\rf)$ would be associated with a nonpositive definite density, which is absurd. 
\end{proof}
\subsection{Properties of $ \Rs $}\label{subsection: properties of Rs}
\begin{remark}\label{obs: upper_Rs}
    At perfect state transfer $\| \Rs \| = 1/N$ and $\cos \phi_n = 1$.  This follows from~\eqref{eq: Rs} and the orthogonality of $\hat{\Phi}$ and $\hat{\mathsf{K}}_n$, which implies $\| \Rs \|^2 = (F/N)^2 + (\zeta_n/(f_n t_f \| \Kn \|))^2$.  Lemma~\ref{lemma: zero_sensitivity} shows that $\zeta_n = 0$ and thus $\| \Rs \| = (1/N)$ for $F=1$. The relation $F = N \| \Rs \| \cos \phi_n$ further implies that $\cos \phi_n = 1$ for perfect state transfer.
\end{remark}
\begin{remark}
    For any fidelity, $\|\Rs\|$ is bounded below by $\| \Rs \| \geq F/N$. This follows directly from~\eqref{eq: sens_fidelity_1}, which requires $F/\left(N \| \Rs\| \right) < 1$ for the sensitivity $\zeta_n$ to be real.
\end{remark}
Though Remark~\ref{obs: upper_Rs} only provides the upper bound on $\| \Rs \|$ for perfect state transfer, we find that this holds as an upper bound in general.  Figure~\ref{fig: N=5_Rs} for the case of a $5$-ring shows that $\| \Rs \|$ remains below $1/N$ for all controllers.
\begin{figure}
\centering
{\includegraphics[width = 0.9\columnwidth]{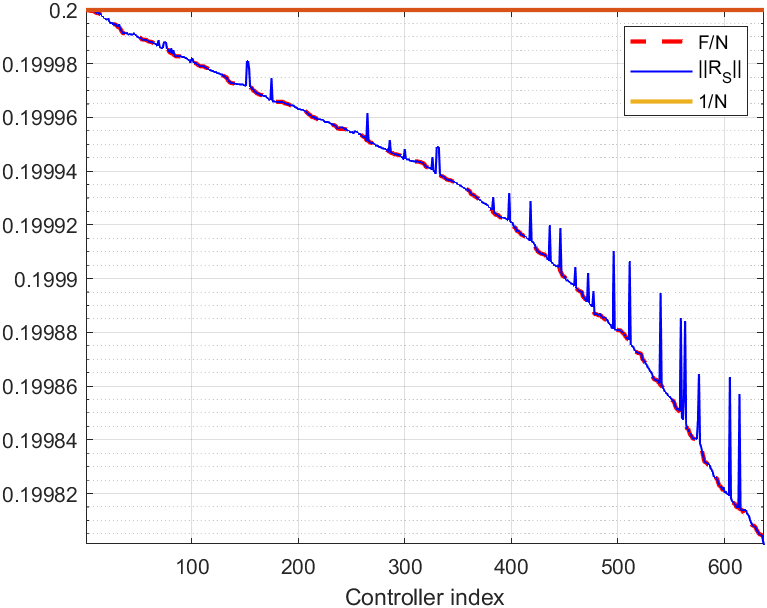}}
\caption{Plot of $\|\Rs\|$ and $1/F$ for a $5$-ring and $\ket{1} \rightarrow \ket{3}$ transfer with coupling uncertainty indexed by $n=10$.} 
\label{fig: N=5_Rs}
\end{figure}
\section{Vanishing Differential Sensitivity}
We now extend the result of~\cite[Th. 3]{Schirmer_2017} for sufficient conditions on vanishing sensitivity in spintronic networks.
\begin{theorem}\label{thm: vanishing_sensitivity}
For unitary evolution and controllers that yield non-zero fidelity, the differential sensitivity $\zeta_n$ vanishes for any physically allowable perturbation or uncertainty structure if and only if the controlled propagator induces perfect state transfer. 
\end{theorem}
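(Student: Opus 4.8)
The statement is an equivalence whose reverse implication---perfect state transfer $\Rightarrow \zeta_n=0$ for every admissible perturbation---is exactly Lemma~\ref{lemma: zero_sensitivity}. The plan is therefore to cite that lemma for sufficiency and to spend the effort on the forward (necessity) implication: if $\zeta_n$ vanishes for \emph{every} physically allowable structure, then the propagator $\Phi$ must realize perfect state transfer.

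For necessity I would work directly from the identity derived inside the proof of Lemma~\ref{lemma: zero_sensitivity}, namely $-\zeta_n/(f_n t_f)=\tr(\R^T\Phi\W)$ with $\W=\int_0^1 e^{-t_f\A s}\Sn e^{t_f\A s}\,ds\in\mathfrak{so}(N^2)$. Writing $\R^T\Phi=\r0(\Phi^T\rf)^T$ and splitting it into its symmetric and skew-symmetric parts $\R^T\Phi=\Sigma+\Omega$, with $\Omega=\tfrac12\big(\r0(\Phi^T\rf)^T-(\Phi^T\rf)\r0^T\big)$, the symmetric part contributes nothing against a skew $\W$, so that $-\zeta_n/(f_n t_f)=\tr(\Omega\W)=-\braket{\Omega,\W}$. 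Imposing $\zeta_n=0$ for all admissible structures thus forces $\braket{\Omega,\W}=0$ across the whole family of admissible $\W$. Granting for the moment that this forces $\Omega=0$, symmetry of the dyad $\R^T\Phi=\r0(\Phi^T\rf)^T$ yields $\r0=\lambda\,\Phi^T\rf$; normalization gives $\lambda=\pm1$, the branch $\lambda=-1$ is excluded by the positivity argument already used in Lemma~\ref{lemma: zero_sensitivity}, and $\lambda=+1$ is precisely $\rf=\Phi\r0$, i.e.\ perfect state transfer.

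The crux is the deferred step $\braket{\Omega,\W}=0\Rightarrow\Omega=0$. The admissible Bloch generators $\Sn$ are not an arbitrary slice of $\mathfrak{so}(N^2)$ but the images of Hermitian structures under $S\mapsto\mathrm{ad}_{-\imath S}$, so they populate only the adjoint-representation subspace rather than all skew-symmetric matrices; by the triviality of the kernel of $\Sn\mapsto\W$ from Lemma~\ref{lemma: norm_Kn} the corresponding $\W$ span an isomorphic proper subspace, and orthogonality of $\Omega$ to that subspace does not by itself pin down $\Omega$. I must therefore show that the skew matrix $\Omega$ built from the physical Bloch vectors $\r0$ and $\Phi^T\rf$ cannot be orthogonal to this subspace unless it vanishes, and this is exactly where the non-zero-fidelity hypothesis enters: it removes the degenerate configuration $F=0$ (equivalently $\Phi\r0\perp\rf$), which is the only regime in which a nonzero $\Omega$ could evade detection by the admissible perturbations, so that under $F\neq0$ the vanishing of every overlap genuinely forces $\Omega=0$. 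Reducing ``$\Omega$ not orthogonal to the admissible subspace'' to a structural property of pure-state Bloch vectors at non-zero fidelity is the main obstacle; the rest is the bookkeeping of the preceding paragraph.
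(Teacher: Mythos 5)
Your sufficiency step is exactly the paper's (cite Lemma~\ref{lemma: zero_sensitivity}), but your necessity argument contains a genuine gap that you flag yourself and never close: the claim that $\braket{\Omega,\W}=0$ for every admissible $\W$ forces $\Omega=0$ when $F\neq 0$. Your reduction up to that point is sound --- the symmetric part of $\R^T\Phi$ pairs to zero against the skew $\W$, and $\Omega=0$ does yield $\r0=\lambda\Phi^T\rf$ with $\lambda=-1$ excluded by the positivity argument --- but the deferred step is not bookkeeping; it is the entire mathematical content of the necessity direction. As you yourself observe, the admissible $\W$ lie in the image of the Hermitian structures under $S_n\mapsto \mathrm{ad}_{-\imath S_n}$ followed by the averaging map, a subspace of dimension at most $N^2-1$ inside $\mathfrak{so}(N^2)$, whose dimension is $N^2(N^2-1)/2$. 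Orthogonality of a skew matrix to so thin a subspace does not generically annihilate it, and your assertion that $F=0$ is ``the only regime in which a nonzero $\Omega$ could evade detection'' is stated, not proved. A proposal whose pivotal lemma is introduced by ``granting for the moment'' is a reduction plus a conjecture, not a proof.

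It is worth noting that the paper's necessity argument takes a different route that sidesteps the spanning question entirely: it works inside the two-dimensional subspace $\mathcal{S}_n$ via the identity~\eqref{eq: sens_fidelity_1}, $|\zeta_n| = f_n t_f \|\Kn\|\,\|\Rs\|\,|\sin\phi_n|$. Since $f_n,t_f\neq 0$ for any realizable controller, $\|\Kn\|\neq 0$ by Lemma~\ref{lemma: norm_Kn}, and $\|\Rs\|=0$ would force $F=0$ (excluded by the non-zero-fidelity hypothesis, via Eqs.~\eqref{eq: proj_S}--\eqref{eq:trigonometry}), $\zeta_n=0$ compels $\sin\phi_n=0$, which is the perfect-transfer configuration of Remark~\ref{obs: upper_Rs}. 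If you prefer to complete your own route, the missing detection lemma can be proved by passing back to the Hilbert-space picture: with $c=\braket{\psi_f|U(t_f)|\psi_0}$ one has $\zeta_n \propto \mathrm{Re}\bigl[\imath\,\bar{c}\,\bra{\psi_f}U(t_f)\bar{S}_n\ket{\psi_0}\bigr]$ where $\bar{S}_n=\int_0^1 U(t_f)^{-s}S_nU(t_f)^s\,ds$; the map $S_n\mapsto\bar{S}_n$ is invertible on Hermitian matrices by precisely the eigenvalue computation in Lemma~\ref{lemma: norm_Kn} (eigenvalues $1$ and $\imath(e^{-\imath\omega_{k\ell}t_f}-1)/(\omega_{k\ell}t_f)$ never vanish), so $\bar{S}_n$ ranges over all Hermitian structures; and for $U(t_f)^\dagger\ket{\psi_f}$ linearly independent of $\ket{\psi_0}$ the functional $T\mapsto\bra{\psi_f}U(t_f)T\ket{\psi_0}$ on Hermitian $T$ is surjective onto $\mathbb{C}$ (take $T=z\,U(t_f)^\dagger\ket{\psi_f}\bra{\psi_0}+\mathrm{h.c.}$ and vary $z$). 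Hence $\zeta_n=0$ for all structures together with $c\neq 0$ forces $U(t_f)\ket{\psi_0}$ proportional to $\ket{\psi_f}$, i.e.\ $\rf=\Phi\r0$ in Bloch form, after which your $\lambda=\pm 1$ argument finishes. Supplying this lemma would make your version, if anything, more explicit than the paper's about where the quantifier ``for any physically allowable perturbation'' is used.
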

\begin{proof}
Sufficiency is established by Lemma~\ref{lemma: zero_sensitivity}. For necessity, we refer to~\eqref{eq: sens_fidelity_1}.
For any physically realizable controller, both $f_n$ and $t_f$ are non-zero.
Moreover, $\| \Kn \|\ne 0$ by Lemma~\ref{lemma: norm_Kn}.
Hence, for $|\zeta_n|=0$, either $\|\Rs\|=0$, $\sin \phi_n=0$, or both. But if $\Rs=0$, it follows from Eqs.~\eqref{eq: proj_S}-\eqref{eq:trigonometry} that $F=0$ and $\zeta_n=0$. 
Since $\zeta_n=0$ is already assumed, the only consequence is $F=0$, which is disallowed by hypothesis.
So $\| \Rs \| \neq 0$ unless the fidelity vanishes. Then $| \zeta_n | = 0$ requires $\sin \phi_n = 0$. But this is precisely the configuration of the operators in $\mathcal{S}_n$ for perfect state transfer.   
\end{proof}
\section{Examples}\label{sec: examples}
To demonstrate the utility of this model, we examine the fidelity error and sensitivity data for two state-transfer examples from the data in~\cite{o'neil_2023_data_set_1}. We choose these cases as exemplary of the varying sensitivity-fidelity trends seen when the controller does not induce perfect state transfer as in Theorem~\ref{thm: vanishing_sensitivity}.
\subsection{Differing Fidelity Error-Sensitivity Profiles}
\begin{figure}
\centering
{\includegraphics[width = 0.9\columnwidth]{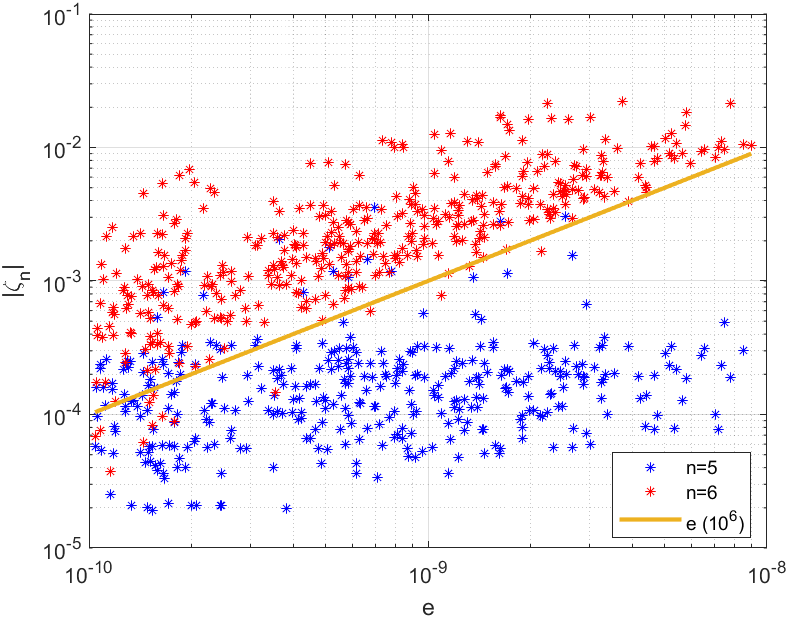}}
\caption{Plot of $|\zeta_n|$ versus $\mathsf{e}$ for an $N=4$, $\ket{1} \rightarrow \ket{2}$ transfer. The solid line indicates a line of unity slope on a log-log scale.} 
\label{fig: N=4_scatter}
\end{figure}
We begin with the case of state transfer from spin $1$ to spin $2$ in a $4$-ring with control mediated by time-invariant controls.  We index the possible perturbations to the controls, the diagonal elements of the SES Hamiltonian, by $n=1$ though $n=4$. We index uncertainty in the couplings (i.e. uncertainty in the entries $((n-4),(n-3))$ and $((n-3),(n-4)))$ of the SES Hamiltonian) by $n=5$ though $n=7$. Uncertainty to the entries $(1,4)$ and $(4,1)$ is indexed by $n=8$. For all cases aside from $n=5$ and $n=7$, $\log{\mathsf{e}}$ and $\log|\zeta_n|$ display a strong linear correlation with a Pearson $r$ greater than $0.75$. For the remaining two cases the correlation coefficient is less than $0.14$. To investigate the origin of this differing behavior, we examine the $n=5$ and $n=6$ uncertainty cases as depicted in Figure~\ref{fig: N=4_scatter}. The strong linear relationship for the $n=6$ case is borne-out by a Pearson $r$ of $0.7503$, while the much flatter trend for the $n=5$ case can be verified by the Pearson $r$ of $0.1373$. 
\begin{figure}
\centering
{\includegraphics[width = 0.9\columnwidth]{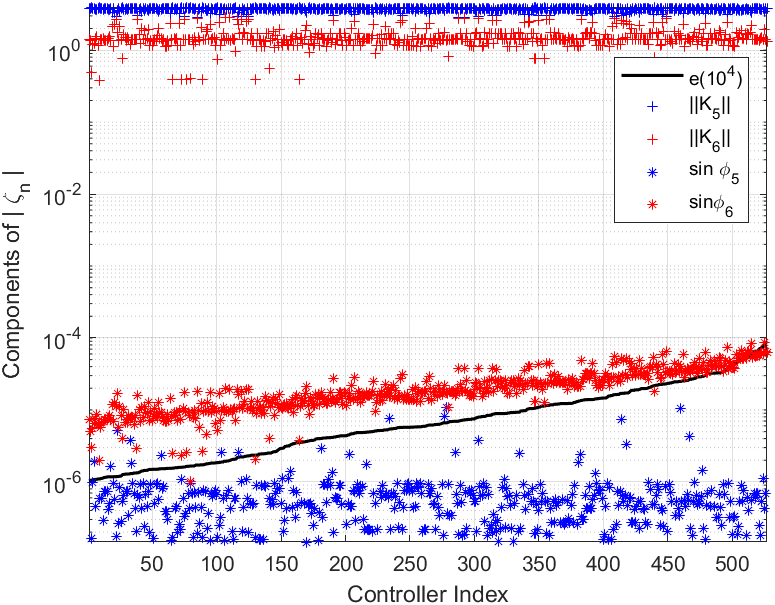}}
\caption{Plot of components of $|\zeta_n|$ from Figure~\ref{fig: N=4_scatter}. While the variation in $\| \Kn \|$ is minimal, $\sin \phi_n$ trends positively with the fidelity error for $n=6$ but shows a flat trend for $n=5$.} 
\label{fig: N=4_components}
\end{figure}
We explain these different trends with the geometric model of Section~\ref{sec: geometry_of_sensitivity} and~\eqref{eq: sens_fidelity_1}.  Consider the visualization of the components of $|\zeta_n|/t_f$ shown in Figure~\ref{fig: N=4_components}.  Note that both cases correspond to coupling uncertainty, so $f_n=1$.  For both cases, $\| \Rs \|$ is nearly constant at $0.25$ across all controllers with a maximum deviation below this value on the order of $10^{-10}$.  This data is not displayed in Figure~\ref{fig: N=4_components}. $\| \Kn \|$ shows some deviation across the controllers, however these deviations are unlikely to induce the different trends observed in Figure~\ref{fig: N=4_scatter}.  In particular for $n=5$, $\| \Kn \|$ has a mean of $3.650$ and variance of $0.0759$, while for $n=6$ the statistics for $\| \Kn \|$ are $1.527$ and $0.2752$. However, as seen from Figure~\ref{fig: N=4_components}, the behavior of $\sin \phi_n$ is markedly different for the controllers in the $n=5$ versus $n=6$ cases. The positive rank correlation between $\sin \phi_6$ and $\mathsf{e}$ is borne out by a Kendall $\tau$ of $0.8075$. Conversely, the lack of rank correlation between metrics for the $n=5$ case is evident in the weak Kendall $\tau$ of $-0.0275$. 
\par This difference in behavior has the geometric interpretation that for the $n=5$ structure, any change in fidelity (error) is more attributable to the increase in the angle defining the inclination of $\R$ with the subspace $\mathcal{S}_5$ than to an increase of $\phi_5$ within the subspace. This minimal change of $\phi_5$ with the fidelity error manifests as the absence of a clear trend between $|\zeta_5|$ and $\mathsf{e}$ as the fidelity changes. Conversely, for the $n=6$ case, an increase in the fidelity error is strongly correlated with, and nearly proportional to, an increase of the angle $\phi_6$ within the subspace $\mathcal{S}_6$. This suggests that those uncertainties that generate a subspace $\mathcal{S}_n$ spanned by $\Phi$ and $\Kn$ where a change in the fidelity corresponds to a change in $\phi_n$, are those that exhibit the greatest sensitivity. This bolsters the assertion that the fidelity-sensitivity relation is not a simple trade-off and must account for geometric factors that may not be directly physically accessible.  
\subsection{Large Variation in Sensitivity for Nearly Equal Error}
We now study a case where there is no trend between $|\zeta_n|$ and $\mathsf{e}$ across controllers for the same uncertain parameter. We consider transfer from spin $1$ to spin $2$ in a $5$-ring with perturbation structure $n=5$ (perturbation to the control field addressing spin $5$). 
\begin{figure}
\centering
{\includegraphics[width = 0.9\columnwidth]{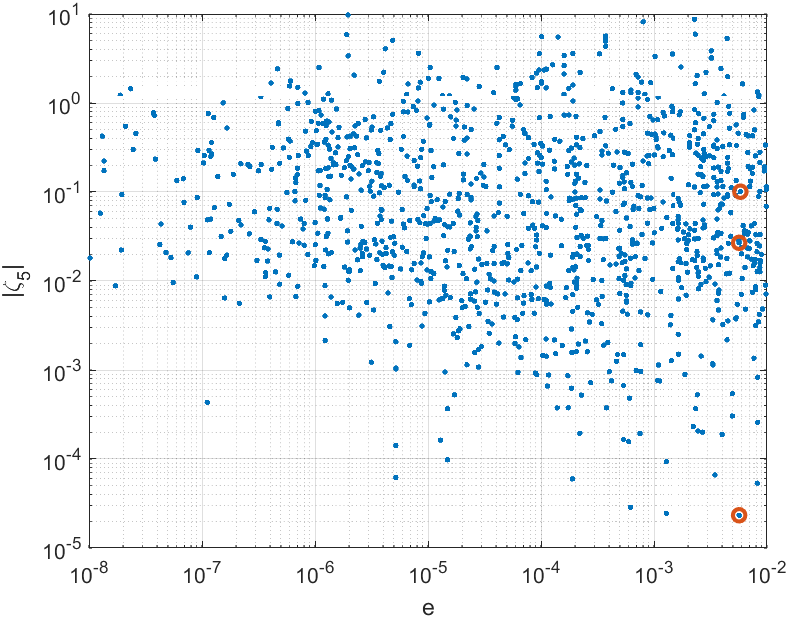}}
\caption{Plot of $|\zeta_5|$ versus $\mathsf{e}$ for an $N=5$, $\ket{1} \rightarrow \ket{2}$ transfer. The data points circled in red display widely varying sensitivity for nearly equal error.} 
\label{fig: N=5_scatter}
\end{figure}
\setlength{\tabcolsep}{5pt}
\renewcommand{\arraystretch}{0.2} 
\begin{table}[b]
\caption[Summary of Parameters Contributing to Sensitivity]{Physical and geometric factors contributing to $| \zeta_5 |$}\label{table: geometry}
\centering
\begin{tabular}{|c|c|c|c|c|c|c|}
\hline
Index & $f_5$ & $t_f$ & $\| \mathsf{K}_5 \|$& $\| \Rs \|$ & $\sin \phi_5$ & $| \zeta_5 |$ \\ \hline
$1585$ & $2.35$ & $198$ & $1.71$ & $0.199$ & $1.48 \times 10^{-7}$ & $2.33 \times 10^{-5}$                           \\ \hline
$1586$ & $330$  & $108$ & $2.74$ & $0.199$ & $1.37 \times 10^{-6}$ &  $2.69 \times 10^{-2}$                          \\ \hline
$1587$ & $17.0$ & $400$  & $1.68$ & $0.198$ & $4.50 \times 10^{-5}$                          & $1.01 \times 10^{-1}$                           \\ \hline
\end{tabular}
\end{table}
Figure~\ref{fig: N=5_scatter} depicts the sensitivity versus error profile, which does not reveal any visual trend between the two metrics. Rather, we see that the controllers with the lowest absolute value of sensitivity fall in the range of $10^{-4} < \mathsf{e} < 10^{-2}$. We focus on the controller indices $1585$, $1586$, and $1587$ (the red circled data points in Figure~\ref{fig: N=5_scatter}) which have an error in the range $5.67 \times 10^{-3} \pm 5 \times 10^{-5}$, while the differential sensitivity for these controllers spans orders of magnitude from $2.33 \times 10^{-5}$ for controller $1585$ up to $0.101$ for controller $1587$. We provide insight into this vast change in sensitivity by considering the geometric factors encoded in the size and orientation of the related matrix operators and the physical parameters $t_f$ and $f_n$ as captured in~\eqref{eq: sens_fidelity_1}. Table~\ref{table: geometry} provides a summary of the relevant factors contributing to $| \zeta_5 |$ for each controller. From this, we see that unlike in the previous example, the behavior cannot be attributed simply to the effect of $\sin \phi_5$, but rather to a combination of the factors $\sin \phi_5$, $t_f$, and $f_5$. The extremely small sensitivity of controller $1585$ is most attributable to a small angle $\phi_5$ between $\Phi$ and $\Rs$ and small control amplitude.  While controller $1586$ demonstrates a $\sin \phi_5$ only an order of magnitude larger than the previous controller and a shorter read-out time, the large value of the control field contributes to an increase of over two orders of magnitude in $| \zeta_5 |$.  Finally, while controller $1587$ admits a smaller control amplitude than its predecessor, the larger read-out time and $\sin \phi_5$, manifest as the largest sensitivity of the trio. This brief analysis justifies the premium placed on minimizing transfer times to increase robustness beyond the desire to minimize the impact of decoherence in open systems in general~\cite{koch_2016_open_systems}.  It also supports efforts to limit control amplitudes beyond energy considerations.  Finally, optimizing for small $\phi_n$ in combination with minimizing the error has the potential to produce controllers with reduced sensitivity in comparison to those optimized simply for minimum error. 
\section{Conclusion}\label{sec: conclusion}
We developed a geometric model of the sensitivity to parametric uncertainty for state transfer problems that analytically relates sensitivity and performance. We employed this model to provide insight into results of previous work based on statistical analysis of the sensitivity versus fidelity error~\cite{jonckheere_2018_jt_test}. With this model we expanded the scope of another previous work~\cite[Th. 3]{Schirmer_2017} by proving that perfect fidelity is not only sufficient, but necessary, for vanishing sensitivity. Future work should focus on relating the geometric parameters $\Rs$ and $\phi_n$ to the eigenstructure of the controlled propagator. Such an approach would allow for direct synthesis of minimum-sensitivity controllers by penalizing the impact of these parameters during controller optimization. Future work should also consider the adaptation of this model to piecewise constant and generally continuous control pulses as well as to applications beyond state transfer such as gate operation.  

\bibliographystyle{ieeetr}
\bibliography{IEEEabrv,bibliography}

\begin{thebibliography}{10}

\bibitem{koch_2022_quantum_optimal_control_survey}
{Christiane P. Koch \emph{et al.}}, ``Quantum optimal control in quantum technologies. {S}trategic report on current status, visions and goals for research in {E}urope,'' {\em {EPJ} Quantum Technology}, vol.~9, 7 2022.

\bibitem{Glaser_2015}
{Steffen J. Glaser \emph{et al.}}, ``Training {S}chrödinger's cat: quantum optimal control,'' {\em {EJP D}}, vol.~69, 12 2015.

\bibitem{wang_2023}
S.~Wang, C.~Ding, Q.~Fang, and Y.~Wang, ``Quantum robust optimal control for linear complex quantum systems with uncertainties,'' {\em {IEEE} Trans. Autom. Control}, vol.~68, no.~11, pp.~6967--6974, 2023.

\bibitem{James_2007}
M.~R. James, H.~I. Nurdin, and I.~R. Petersen, ``{$H^{\infty}$} control of linear quantum stochastic systems,'' {\em {IEEE} Trans. Autom. Control}, vol.~53, no.~8, pp.~1787--1803, 2008.

\bibitem{khalid_2023_rim}
I.~Khalid, C.~A. Weidner, E.~A. Jonckheere, S.~G. Shermer, and F.~C. Langbein, ``Statistically characterizing robustness and fidelity of quantum controls and quantum control algorithms,'' {\em Phys. Rev. A}, vol.~107, Mar. 2023.

\bibitem{koswara_2021}
A.~Koswara, V.~Bhutoria, and R.~Chakrabarti, ``Robust control of quantum dynamics under input and parameter uncertainty,'' {\em Phys. Rev. A}, vol.~104, p.~053118, Nov 2021.

\bibitem{Kosut_2013}
R.~L. Kosut, M.~D. Grace, and C.~Brif, ``Robust control of quantum gates via sequential convex programming,'' {\em Phys. Rev. A}, vol.~88, nov 2013.

\bibitem{jonckheere_2018_jt_test}
E.~Jonckheere, S.~Schirmer, and F.~Langbein, ``Jonckheere‐{T}erpstra test for nonclassical error versus log‐sensitivity relationship of quantum spin network controllers,'' {\em Int. J. Robust Nonlinear Control.}, vol.~28, p.~2383–2403, jan 2018.

\bibitem{o'neil_2024_log_sens}
S.~O'Neil, S.~Schirmer, F.~C. Langbein, C.~A. Weidner, and E.~A. Jonckheere, ``Time-domain sensitivity of the tracking error,'' {\em IEEE Transactions on Automatic Control}, vol.~69, no.~4, pp.~2340--2351, 2024.

\bibitem{o'neil_2023_data_set_1}
S.~P. O’Neil, F.~C. Langbein, E.~Jonckheere, and S.~Shermer, ``Robustness of energy landscape controllers for spin rings under coherent excitation transport,'' {\em Research Directions: Quantum Technologies}, vol.~1, p.~e12, 2023.

\bibitem{Jonckheere_2017}
E.~A. Jonckheere, S.~G. Schirmer, and F.~C. Langbein, ``Structured singular value analysis for spintronics network information transfer control,'' {\em {IEEE} Trans. Autom. Control}, vol.~62, no.~12, pp.~6568--6574, 2017.

\bibitem{donovan_2011_hamitonian_manipulation}
A.~Donovan, V.~Beltrani, and H.~A. Rabitz, ``Quantum control by means of hamiltonian structure manipulation.,'' {\em Phys. Chem. Chem. Phys.}, vol.~13 16, pp.~7348--62, 2011.

\bibitem{zhang_2023_gate_field_control}
R.-W. Zhang, C.~Cui, R.~Li, J.~Duan, L.~Li, Z.-M. Yu, and Y.~Yao, ``Predictable gate-field control of spin in altermagnets with spin-layer coupling,'' {\em Phys. Rev. Lett.}, vol.~133, p.~056401, Aug 2024.

\bibitem{Schirmer_2017}
S.~G. Schirmer, E.~A. Jonckheere, and F.~C. Langbein, ``Design of feedback control laws for information transfer in spintronics networks,'' {\em {IEEE} Trans. Autom. Control}, vol.~63, no.~8, pp.~2523--2536, 2018.

\bibitem{o'neil_2023_cdc_2023}
S.~P. O’Neil, I.~Khalid, A.~A. Rompokos, C.~A. Weidner, F.~C. Langbein, S.~Schirmer, and E.~A. Jonckheere, ``Analyzing and unifying robustness measures for excitation transfer control in spin networks,'' {\em IEEE Control Syst. Lett.}, vol.~7, pp.~1783--1788, 2023.

\bibitem{floether_2012}
F.~F. Floether, P.~de~Fouquieres, and S.~G. Schirmer, ``Robust quantum gates for open systems via optimal control: {M}arkovian versus non-{M}arkovian dynamics,'' {\em New J. Phys.}, vol.~14, p.~073023, jul 2012.

\bibitem{bertlmann_2008_bloch}
R.~A. Bertlmann and P.~Krammer, ``Bloch vectors for qudits,'' {\em J. Phys. A-Math.}, vol.~41, p.~235303, may 2008.

\bibitem{najfeld_1995_dexpma}
I.~Najfeld and T.~Havel, ``Derivatives of the matrix exponential and their computation,'' {\em Adv. Appl. Math.}, vol.~16, no.~3, pp.~321--375, 1995.

\bibitem{siewert_2022_matrix_bases}
J.~Siewert, ``On orthogonal bases in the {H}ilbert-{S}chmidt space of matrices,'' {\em J. Phys. Commun.}, vol.~6, p.~055014, May 2022.

\bibitem{elliott_2009_matrices_in_action}
D.~Elliott, {\em Bilinear Control Systems: Matrices in Action}.
\newblock Springer Publishing Company, Inc., 1st~ed., 2009.

\bibitem{koch_2016_open_systems}
C.~P. Koch, ``Controlling open quantum systems: tools, achievements, and limitations,'' {\em J. Condens. Matter Phys.}, vol.~28, p.~213001, 2016.

\end{thebibliography}

\end{document}